\date{}
\newcolumntype{M}[1]{>{\centering\arraybackslash}m{#1}}
\newcolumntype{N}{@{}m{0pt}@{}}
\title{\Huge
{Bounds and New Constructions for Girth-Constrained Regular Bipartite Graphs}}
\author{\IEEEauthorblockN{Sheida Rabeti, Mohsen Moradi, and Hessam Mahdavifar} 
\IEEEauthorblockA{Department of Electrical and Computer Engineering, Northeastern University, Boston, MA 02115, USA \\ 
Email: \{rabeti.s, m.moradi, h.mahdavifar\}@northeastern.edu}
\thanks{This work was supported by NSF under Grant CCF-2415440 and the Center for Ubiquitous Connectivity (CUbiC) under the JUMP 2.0 program.}

}
\newtheorem{theorem}{{Theorem}}
\newtheorem{proposition}[theorem]{{Proposition}}
\newtheorem{definition}{{Definition}}
\newcommand{\cC}{{\cal C}} 
\newcommand{\cD}{{\cal D}}
\newcommand{\cG}{{\cal G}}
\newcommand{\cO}{{\cal O}}
\newcommand{\cV}{{\cal V}}
\DeclareMathAlphabet{\mathbfsl}{OT1}{ppl}{b}{it} 
\newcommand{\be}[1]{\begin{equation}\label{#1}}
\newcommand{\ee}{\end{equation}}
\renewcommand{\leq}{\leqslant}
\renewcommand{\geq}{\geqslant}
\renewcommand{\Bbb}{\mathbb}
\newcommand{\N}{{\Bbb N}}
\newcommand{\Cref}[1]{Co\-ro\-lla\-ry\,\ref{#1}}
\begin{document}

\vspace{10mm}
\maketitle

\begin{abstract}
In this paper, we explore the design and analysis of regular bipartite graphs motivated by their application in low-density parity-check (LDPC) codes specifically with constrained girth and in the high-rate regime. We focus on the relation between the girth of the graph, and the size of the sets of variable and check nodes. We derive bounds on the size of the vertices in regular bipartite graphs, showing how the required number of check nodes grows with respect to the number of variable nodes as girth grows large. Furthermore, we present two constructions for bipartite graphs with girth $\cG = 8$; one based on a greedy construction of $(w_c, w_r)$-regular graphs, and another based on semi-regular graphs which have uniform column weight distribution with a sublinear number of check nodes. The second construction leverages sequences of integers without any length-$3$ arithmetic progression and is asymptotically optimal while maintaining a girth of $8$. Also, both constructions can offer sparse parity-check matrices for high-rate codes with medium-to-large block lengths. Our results solely focus on the graph-theoretic problem but can potentially contribute to the ongoing effort to design LDPC codes with high girth and minimum distance, specifically in high code rates.
\end{abstract}

\section{Introduction}
Graphs have found extensive applications across various fields, particularly in communication theory, where they play a significant role in the design and analysis of error-correcting codes. One notable class of graph-based codes that has attracted significant attention is low-density parity-check (LDPC) codes. These codes, which are represented by sparse bipartite graphs, offer exceptional error correction capabilities and have been widely adopted in modern communication systems, especially in wireless and cellular communications systems, due to their near-optimal performance and efficient decoding algorithms \cite{mackay1999good, richardson2001capacity}. In general, LDPC codes, as introduced by Gallager in \cite{gallager1962low}, are defined by a sparse structure in the parity-check matrix. Specifically, this structure often involves assigning a small, fixed number of ones to each row and column. The girth of a graph is the length of the smallest cycle that appears in the graph. Throughout this paper, we denote the girth by $\cG.$ It is well-known that parity-check matrices with small minimum distances and/or small girths adversely affect the performance of the belief-propagation (BP) decoders. In fact, these parameters are highly correlated with the size of trapping sets for the BP decoder. In addition, short loops imply that the messages incoming to the variable/check nodes will be conditionally dependent after only a few decoding iterations, which may serve as another factor that affects the performance negatively \cite{price2017survey}. Consequently, designing bipartite graphs with relatively high girths has been a major focus in the literature of LDPC codes; see, e.g., \cite{smarandache2022unifying,smarandache2012quasi,kim2007quasi,tasdighi2016efficient} and references therein.
Interestingly, this problem is also a classical problem that has been of interest in graph theory and combinatorics; see, e.g., \cite{tutte1947family, alon2002moore,erdos1964extremal,de1991maximum}. 

Motivated by the above, in this paper, we attempt to provide bounds specifically for bipartite graphs, arguing that not only there exist conditions on the overall size of vertices when the girth is constrained, but also there exist certain constraints on the relative size of each part of the bipartite graph to each other. A regular bipartite graph (or biregular bipartite graph, as often referred to in the literature on graph theory when the underlying graph is bipartite) is defined as a graph with two parts $\cV_c, \cV_r$ such that the degree of each vertex $v_c \in \cV_c$ is $w_c$, and the degree of each vertex $v_r \in \cV_r$ is $w_r$. For simplicity, we call such graphs $(w_c, w_r)$-regular graphs. Note that, contrary to the mainstream literature on LDPC codes, we do not assume $w_c$ and/or $w_r$ to be fixed or regarded as constants, unless specifically mentioned. However, we keep the same notation for the sake of specifying whether the graph is regular or \textit{semi-regular}, where semi-regular refers to regularity only in the column weight distribution. Assuming $w_c \leq w_r$, we show that for $\cG = 8$, $|\cV_r| \geq \cO(|\cV_c|^{1/2})$, for $\cG = 10, 12$, $|\cV_r| \geq \cO(|\cV_c|^{2/3})$, and for $\cG = 14, 16$ we have $|\cV_r| \geq \cO(|\cV_c|^{3/4})$ while characterizing the multiplicative constant in these results. In particular, by solving the corresponding quadratical, qubic, and quartic equations, we can find some tighter bounds compared with existing results, i.e., \cite{de1991maximum, lam2001graphs}. 
Moreover, in the recent work \cite{mahdavifar2024high}, fair-density parity-check (FDPC) codes were introduced that target high-rate applications, while demonstrating excellent performance. These codes operate in the regime with the number of parity-check equations $m = |\cV_r|$ being sublinear with the block-length $n = |\cV_c|$, which serves as another motivation to study lower bounds on $|\cV_r|$, although sublinear in $|\cV_c|$, given a desired girth. 

Furthermore, we provide two constructions for the specific case of $\cG = 8$. The first construction is a $(w_c, w_r)$-regular graph constructed with a greedy-type method and results in a code with minimum distance $d_{min} = 2^{w_c}$ when used as the Tanner graph of the code. The second construction is semi-regular and the size of the parity-check nodes in this construction is asymptotically $\cO(\sqrt n)$, while providing girth $\cG = 8$. We will observe that this is nearly optimal compared to the lower bound we provide. 

The rest of this paper is organized as follows. In Section \ref{sec:bound}, we provide bounds on the size of the vertices for regular bipartite graphs. In Section \ref{sec:reg}, we construct a regular bipartite graph with a notably large minimum distance, and in Section \ref{sec:semi}, we propose a semi-regular graph which has a smaller vertex size compared to the regular construction. Finally, we conclude the paper in Section \ref{sec:conclusion}.

\section{Bounds On The Size of Regular Bipartite Graphs}
\label{sec:bound}
In this section, we derive bounds on the size of $(w_c, w_r)$-regular bipartite graphs up to the girth $16$. The main idea follows a well-known approach based on the fact that for any graph with girth $\cG = 2\ell$, and for every vertex $v$, there exists a tree of height $\ell$ rooted at $v$ (see, e.g., \cite{Biggs1998ConstructionsFC}). Let $m = |\cV_r|$, and $n = |\cV_c|$ denote the number of vertices with degrees $w_r$, and $w_c$ respectively. In the following, we provide a lower bound on $m$ in terms of $n$ and $w_c$. 

\subsection{Girth $8$}
\begin{center}
    \begin{tikzpicture}[
    scale=.44, 
    every node/.style={font=\normalsize}, 
    circ/.style={circle, fill=black, inner sep=0pt, minimum size=7pt},
    sq/.style={rectangle, draw, inner sep=0pt, minimum size=7pt}
]

        \node[sq,pin=90:{Check nodes}] (n11) at (0,0) {};

        \node[circ,pin=90:{Variable nodes}] (n21) at (-2.5,-1) {};
        \node[circ] (n22) at (2.5,-1) {};

        \node[sq] (n31) at (-4.5,-4) {};
        \node[sq] (n32) at (-0.5,-4) {};
        \node[sq] (n33) at (0.5,-4) {};
        \node[sq] (n34) at (4.5,-4) {};

        \node[circ] (n41) at (1,-7) {};
        \node[circ] (n42) at (-2,-7) {};
        \node[circ] (n43) at (3,-7) {};
        \node[circ] (n44) at (6,-7) {};
        
        \node[] (n51) at (1.5,-8.5) {};
        \node[] (n52) at (3.3,-8.5) {};
        \node[] (n53) at (5.5,-8.5) {};

        \node[] (n71) at (0,-9) {};
        \node[] (n72) at (6,-9) {};
        \node[] (n73) at (-4.5,-9) {};
        
        \node[sq] (n61) at (0,-11) {};
            \begin{scope}[every node/.style={above,sloped,font=\LARGE,scale=.6}]

        \draw (n11) -- node {$1$} (n21);
        \draw (n11) -- node {$w_r$} (n22);

        \draw (n21) -- node {$1$} (n31);
        \draw (n21) -- node {$w_c-1$} (n32);
        \draw (n22) -- node {$1$} (n33);
        \draw (n22) -- node {$w_c-1$} (n34);

        \draw (n32) -- node {$w_r-1$} (n41);
        \draw (n32) -- node {$1$} (n42);
        \draw (n34) -- node {$1$} (n43);
        \draw (n34) -- node {$w_r-1$} (n44); 

        \draw (n43) -- node {$1$} (n51);
        \draw (n43) -- node {$i$} (n52);
        \draw (n43) -- node {$w_c-1$} (n53);

        \draw (n61) -- node {$j$} (n71);
        \draw (n61) -- node {$w_r$} (n72);
        \draw (n61) -- node {$1$} (n73);
        \end{scope}
    \end{tikzpicture}
\end{center}
\begin{theorem}
\label{thm-8bnd}
    Let $w_c \in \N$. For every regular bipartite graph of column-weight $w_c$ and with girth $\cG = 8$, the size of the smaller part $m$ satisfies:
    \begin{equation}
        m \geq \frac{-w_c(w_c-2)+w_c\sqrt{(w_c-2)^2+4(w_c-1)n}}{2}.
    \end{equation}
\end{theorem}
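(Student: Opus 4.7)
The plan is to apply the standard Moore-type tree-unfolding argument, specialized to the bipartite setting, and then close the loop via the basic edge-counting identity. The key observation is that when $\cG = 8$ no two distinct paths of combined length at most $6$ can share their endpoints, so the BFS neighborhood of any vertex up to depth $3$ is necessarily a tree rather than a more general subgraph.

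Following the figure above, I would fix an arbitrary check node $c \in \cV_r$ and unfold its BFS tree down to depth $3$. The first level consists of the $w_r$ variable-node neighbors of $c$. The second level contains $w_r(w_c-1)$ check nodes, one $w_c-1$ tuple per level-$1$ variable node excluding $c$ itself. The third level then contains $w_r(w_c-1)(w_r-1)$ variable nodes. Since all these variable nodes are pairwise distinct elements of $\cV_c$ (else a cycle of length $\le 6$ would arise), I obtain
\begin{equation*}
    n \;\geq\; w_r\bigl[1 + (w_c-1)(w_r-1)\bigr].
\end{equation*}
I then invoke the edge-counting identity $n w_c = m w_r$, valid in every $(w_c,w_r)$-regular bipartite graph, and substitute $w_r = n w_c / m$. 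Dividing the inequality above by $w_r$ on both sides and clearing denominators reduces the statement to the quadratic inequality
\begin{equation*}
    m^2 + w_c(w_c-2)\,m - n\,w_c^2(w_c-1) \;\geq\; 0.
\end{equation*}
Solving this quadratic in $m$ and keeping the positive root yields precisely the expression claimed in the theorem.

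The main obstacle I expect is the combinatorial bookkeeping needed to justify that the unfolded BFS subgraph is genuinely a tree. Concretely, one must rule out every possible coincidence among the $w_r + w_r(w_c-1) + w_r(w_c-1)(w_r-1)$ vertices on the first three levels: a collision of two level-$1$ variables with a level-$2$ check would give a $4$-cycle, two level-$3$ variables coinciding would give a cycle of length $4$ or $6$ depending on whether their depth-$1$ ancestors agree, and so on. Each such collision closes a cycle of length at most $6$ and is ruled out by $\cG = 8$. Beyond this verification step, the remainder is routine algebra; one just has to be careful that multiplying through by $m > 0$ preserves the direction of the inequality and that the discriminant $(w_c-2)^2 + 4(w_c-1)n$ is nonnegative, so that the positive root of the quadratic is the binding lower bound.
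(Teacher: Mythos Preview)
Your argument is correct and follows the same Moore-type BFS tree idea as the paper, arriving at the identical quadratic $m^2 + w_c(w_c-2)m - w_c^2(w_c-1)n \ge 0$ after invoking $nw_c = mw_r$. The only difference is which side of the bipartition you count: the paper bounds $m$ directly by counting check nodes at depths $0,2$ and then uses an extra double-counting of the depth-$4$ edges to lower-bound the number of distinct depth-$4$ check nodes, whereas you bound $n$ by counting variable nodes at depths $1$ and $3$ and then convert via the edge identity. Your variant is marginally cleaner since distinctness through depth $3$ follows immediately from $\cG\ge 8$ without the auxiliary edge-counting step; otherwise the two proofs are interchangeable.
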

\begin{proof}
    Using a $3$-level tree rooted at an arbitrary check node, we can determine that the total number of check nodes in the tree, excluding the last layer, is equal to
    \begin{equation}
        m' = 1 + w_r(w_c-1).
    \end{equation}
    By double-counting the total number of edges in the last layer, we find that
    \begin{equation}
        w_r(w_r-1)(w_c-1)^2 \leq w_r. m_L
    \end{equation}
    where $m_L$ denotes the number of check nodes in the last layer of the tree.
    
    Note that the equality may not hold because connecting all check nodes in the last layer may result in some $4$-cycles or $6$-cycles. Thus,
    \begin{equation}
        m = m^{'} + m_L \geq 1 + w_r(w_c-1) + (w_r-1)(w_c-1)^2.
    \end{equation}
    Substituting the $w_r = \frac{w_cn}{m}$ into the expression yields
    \begin{equation}
    \label{eq-boun-8}
        m^2 + w_c(w_c - 2) m - w_c^2(w_c-1)n \geq 0.
    \end{equation}
    Solving the equation \eqref{eq-boun-8} gives the positive root of
    \begin{equation}
        x = \frac{-w_c(w_c-2) + w_c\sqrt{(w_c-2)^2+4(w_c-1)n}}{2}.
    \end{equation}
    By the convexity of the quadratic equation and since the other root is negative, we find that $m \geq x$.
\end{proof}

The relevance of the setup in Theorem \ref{thm-8bnd} to channel coding can be explained as follows. For example, consider the FDPC codes of \cite{mahdavifar2024high} that operate in high rates. The high-rate regime of this code inherently results in the row weight of the parity-check matrix being much larger than the column weight. Hence, it is relevant to ask the following question for LDPC (or FDPC) coding at high rate: Given a certain constraint on the column weight and the girth of the Tanner graph, what is the minimum required redundancy at a given block length? That is what Theorem \ref{thm-8bnd} provides a lower bound for. 

\noindent{Remark $1$:}
Note that in \cite{mahdavifar2024high}, the tanner graph corresponding to the base parity-check matrix has a girth of $8$ with parameters $w_c = 2$ and $m = 2\sqrt{n}$. Substituting these parameters into equation \eqref{thm-8bnd} yields equality, demonstrating the optimality of the base matrix of \cite{mahdavifar2024high} in the case of $w_c=2$.
\subsection{$10 \leq \cG \leq 16$}
Similarly, at Step $1$, we determine the bounds corresponding to girths $10$, $12$, $14$, and $16$ using trees rooted at a check node with heights $5$, $6$, $7$, and $8$, respectively. At Step $2$, further results are obtained by solving cubic and quartic equations. Due to the complexity of the roots, we focus on the lower bounds in an order-wise manner. However, the specific constants can be essentially derived leveraging formulas for roots of polynomials up to degree four though in rather tedious forms. It is known by Abel's impossibility theorem that such formulas do not exist for higher degree polynomials, and that is the main reason we stop at $\cG = 16$.

\subsubsection{Step $1$}
The inequalities are found as below:
\begin{center}
\begin{tabular}{ |p{1.3cm}||p{6.5cm}|  }
 \hline
 \multicolumn{2}{|c|}{Inequalities} \\
 \hhline{|=|=|}
 Girth $10$ & $m \geq 1 + w_r(w_c-1) + w_r(w_r-1)(w_c-1)^2$ \\
 \hline
 Girth $12$ & $m \geq 1 + w_r(w_c-1) + w_r(w_r-1)(w_c-1)^2 + (w_r-1)^2(w_c-1)^3$\\
 \hline
 Girth $14$ & $m \geq 1 + w_r(w_c-1) + w_r(w_r-1)(w_c-1)^2 + w_r(w_r-1)^2(w_c-1)^3$  \\
 \hline
 Girth $16$ & $m \geq 1 + w_r(w_c-1) + w_r(w_r-1)(w_c-1)^2 + w_r(w_r-1)^2(w_c-1)^3 + (w_r-1)^3(w_c-1)^4$ \\
 \hline
\end{tabular}
\end{center}
\subsubsection{Step $2$} 
Substituting $w_r = \frac{w_c n}{m}$ yields polynomials $P_{\cG}(m)$, corresponding to each girth $\cG$, that satisfy $P_{\cG}(m) \geq 0$. For $a_{i,j} \geq 0$, we have
\begin{center}
\begin{tabular}{ |p{1.3cm}||p{6.5cm}|  }
 \hline
 \multicolumn{2}{|c|}{Polynomials} \\
 \hhline{|=|=|}
 $P_{10}(m)$ & $m^3 - m^2 + a_{1,1} n m - a_{1,2}n^2$ \\
 \hline
 $P_{12}(m)$ & $m^3 - a_{2,1} m^2 + a_{2,2} n m - a_{2,3}n^2$\\
 \hline
 $P_{14}(m)$ & $m^4 - m^3 - a_{3,1}nm^2 + a_{3,2}n^2m - a_{3,3}n^3$  \\
 \hline
 $P_{16}(m)$ & $m^4 + a_{4,1} m^3 -a_{4,2} n m^2 + a_{4,3} n^2 m - a_{4,4} n^3 $ \\
 \hline
\end{tabular}
\end{center}
\subsubsection{Step $3$}
For all polynomials we have $P_{\mathcal{G}}(0) \leq 0$, while they are monic which means that as $n \rightarrow \infty$, or consequently, $m \rightarrow \infty$, there exists $m_0 \geq 0$ such that for every $m \geq m_0$, $P_{\mathcal{G}}(m) \geq 0$. 
Thus, by Intermediate Value Theorem, polynomials have a positive real-valued root. Denote such a root as $x_\cG$ where $P_\cG(x_\cG) = 0$. 
Similarly, for $P_{14}(m)$ and $P_{16}(m)$, it can be shown that each also has a negative real-valued root. 
Let us denote the discriminants of the polynomials by $\cD$. As $n \to \infty$, it can be proved that $\cD < 0$ for sufficiently large $n$, as $P_{10}(0), P_{12}(0), P_{14}(0), P_{16}(0) < 0$. 
Since $\cD <0$, is it derived that there is only one real root for $P_{10}(m),$ and $P_{14}(m)$, and exactly two real-valued roots for $P_{14}(m)$, and $P_{16}(m)$. 
By solving the cubic and quartic equations, it is derived that:
\begin{itemize}
    \item $x_{10} = \cO(n^{(2/3)})$, $x_{12} = \cO(n^{(2/3)})$
    \item $x_{14} = \cO(n^{(3/4)})$, $x_{16} = \cO(n^{(3/4)})$
\end{itemize}

Thus, for each girth we have

\begin{equation}{\text{Girth $10$: }}
\label{bnd-g10}
    m \geq x_{10} = \cO(n^{(2/3)})
\end{equation}
\begin{equation}{\text{Girth $12$: }}
\label{bnd-g12}
    m \geq x_{12} = \cO(n^{(2/3)})
\end{equation}
\begin{equation}{\text{Girth $14$: }}
\label{bnd-g14}
    m \geq x_{14} = \cO(n^{(3/4)})
\end{equation}
\begin{equation}{\text{Girth $16$: }}
\label{bnd-g16}
    m \geq x_{16} = \cO(n^{(3/4)})
\end{equation}

\noindent{Remark $2$:}
Note that inequalities \eqref{thm-8bnd}, \eqref{bnd-g10}, \eqref{bnd-g12}, \eqref{bnd-g14}, and \eqref{bnd-g16} together state that it is not possible to have a small number of vertices, and high girth simultaneously. In fact, as the girth increases, the minimum number of check nodes required for that girth grows sublinearly but polynomially with $n$.

\section{Girth-Eight Regular Bipartite Graph Construction}
\label{sec:reg}
In this section, we present a construction of $(w_c, w_r)$-regular bipartite graphs for arbitrary integers $w_r \geq w_c \geq 2$, with girth $8$ and minimum distance $d_{\min} = 2^{w_c}$. The construction is intended for high-rate applications with long block lengths.

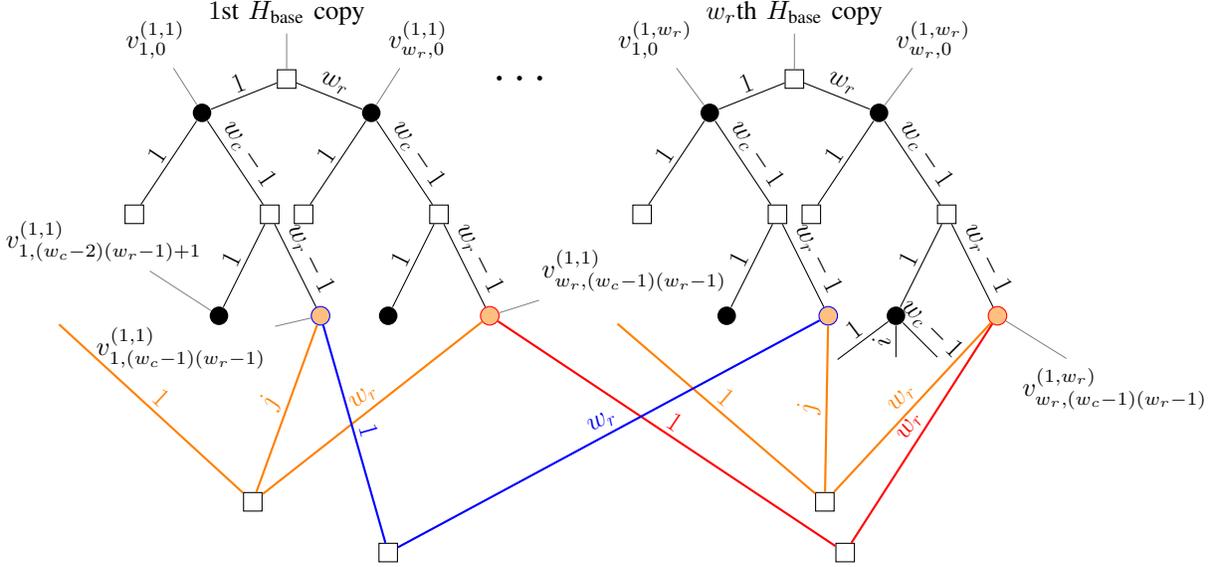
\begin{figure*}[ht]
    \centering
        \label{copytree}

\begin{tikzpicture}[
    scale=.45, 
    every node/.style={font=\normalsize}, 
    circ/.style={circle, fill=black, inner sep=0pt, minimum size=7pt},
    sq/.style={rectangle, draw, inner sep=0pt, minimum size=7pt}
]
\node[sq,pin=90:{$1$st $H_{\text{base}}$ copy}] (n11) at (0,0) {};

\node[circ,pin=100:{$v_{1,0}^{(1,1)}$}] (n21) at (-2.5,-1) {};
\node[circ,pin=80:{$v_{w_r,0}^{(1,1)}$}] (n22) at (2.5,-1) {};

\node[sq] (n31) at (-4.5,-4) {};
\node[sq] (n32) at (-0.5,-4) {};
\node[sq] (n33) at (0.5,-4) {};
\node[sq] (n34) at (4.5,-4) {}; 

\node[circ, draw = blue, fill=orange!50 ,pin=183:{\rotatebox{0}{$v_{1,(w_c-1)(w_r-1)}^{(1,1)}$}}] (n41) at (1,-7) {};
\node[circ,pin=100:{$v_{1,(w_c-2)(w_r-1)+1}^{(1,1)}$}] (n42) at (-2,-7) {};
\node[circ] (n43) at (3,-7) {};
\node[circ, draw = red, fill=orange!50,pin=20:{$v_{w_r,(w_c-1)(w_r-1)}^{(1,1)}$}] (n44) at (6,-7) {};

\node[] (n51) at (2,-8.5) {};
\node[] (n52) at (3.3,-8.5) {};
\node[] (n53) at (5.5,-8.5) {};

\node[] (n71) at (0,-8) {};
\node[] (n72) at (6,-8) {};
\node[] (n73) at (-7,-7) {};

\node[sq] (n61) at (-1,-12.5) {};

\begin{scope}[every node/.style={above,sloped,font=\LARGE,scale=.6}]
    \draw (n11) -- node {$1$} (n21);
    \draw (n11) -- node {$w_r$} (n22);

    \draw (n21) -- node {$1$} (n31);
    \draw (n21) -- node {$w_c-1$} (n32);
    \draw (n22) -- node {$1$} (n33);
    \draw (n22) -- node {$w_c-1$} (n34);

    \draw (n32) -- node {$w_r-1$} (n41);
    \draw (n32) -- node {$1$} (n42);
    \draw (n34) -- node {$1$} (n43);
    \draw (n34) -- node {$w_r-1$} (n44);


    \draw[orange, thick] (n61) -- node {$j$} (n41);
    \draw[orange, thick](n61) -- node {$w_r$} (n44);
    \draw[orange, thick] (n61) -- node {$1$} (n73);
\end{scope}

\node at (7,0) {\LARGE $\cdots$};

\node[sq,pin=90:{$w_r$th $H_{\text{base}}$ copy}] (n11b) at (15,0) {};

\node[circ,pin=100:{$v_{1,0}^{(1,w_r)}$}] (n21b) at (12.5,-1) {};
\node[circ,pin=80:{$v_{w_r,0}^{(1,w_r)}$}] (n22b) at (17.5,-1) {};

\node[sq] (n31b) at (10.5,-4) {};
\node[sq] (n32b) at (14.5,-4) {};
\node[sq] (n33b) at (15.5,-4) {};
\node[sq] (n34b) at (19.5,-4) {};

\node[circ, draw = blue, fill=orange!50] (n41b) at (16,-7) {};
\node[circ] (n42b) at (13,-7) {};
\node[circ] (n43b) at (18,-7) {};
\node[circ, draw = red, fill=orange!50,pin=290:{\rotatebox{0}{$v_{w_r,(w_c-1)(w_r-1)}^{(1,w_r)}$}}] (n44b) at (21,-7) {};

\node[] (n51b) at (16,-8.5) {};
\node[] (n52b) at (18,-8.5) {};
\node[] (n53b) at (19.5,-8.5) {};

\node[] (n71b) at (15,-8) {};
\node[] (n72b) at (21,-8) {};
\node[] (n73b) at (9.5,-7) {};

\node[sq] (n61b) at (15.9,-12.5) {};

\begin{scope}[every node/.style={above,sloped,font=\LARGE,scale=.6}]
    \draw (n11b) -- node {$1$} (n21b);
    \draw (n11b) -- node {$w_r$} (n22b);

    \draw (n21b) -- node {$1$} (n31b);
    \draw (n21b) -- node {$w_c-1$} (n32b);
    \draw (n22b) -- node {$1$} (n33b);
    \draw (n22b) -- node {$w_c-1$} (n34b);

    \draw (n32b) -- node {$w_r-1$} (n41b);
    \draw (n32b) -- node {$1$} (n42b);
    \draw (n34b) -- node {$1$} (n43b);
    \draw (n34b) -- node {$w_r-1$} (n44b);

    \draw (n43b) -- node {$1$} (n51b);
    \draw (n43b) -- node {$i$} (n52b);
    \draw (n43b) -- node {$w_c-1$} (n53b);

    \draw[orange, thick] (n61b) -- node {$j$} (n41b);
    \draw[orange, thick] (n61b) -- node {$w_r$} (n44b);
    \draw[orange, thick] (n61b) -- node {$1$} (n73b);
\end{scope}

\node[sq] (n81b) at (16.5,-14) {};
\node[sq] (n82b) at (3,-14) {};
\begin{scope}[every node/.style={above,sloped,font=\LARGE,scale=.6}]
\draw[red, thick] (n81b) -- node {$w_r$} (n44b);
\draw[red, thick] (n81b) -- node {$1$} (n44);
\draw[blue, thick] (n82b) -- node {$w_r$} (n41b);
\draw[blue, thick] (n82b) -- node {$1$} (n41);
\end{scope}

\end{tikzpicture}
\label{fig:tiki}
\caption{$(w_r, w_c)$-regular $H_\text{reg}$ after the first iteration of \text{Level $3$}}
\end{figure*}

We define a base graph $H_\text{base}$ as a three-layer tree rooted at a check node, in which each check node has degree $w_r$, and each variable node has degree $w_c$, except those located in the final layer. The structure is illustrated in Fig. $1$. To obtain the final layer variable nodes with degree $w_c$, we employ a recursive process that introduces additional sets of check nodes, progressively extending the structure. This construction is described in Algorithm 1.

{\small
\begin{algorithm}
\label{reg-def}
\caption{\((w_c, w_r)\)-regular graph \(H_{\text{reg}}\) with girth $8$}
\begin{algorithmic}[1]

\State \textbf{Input:} Column weight $w_c$, row weight $w_r$
\State \textbf{Output:} \(H_{\text{reg}}\)
\State \textbf{Level 1: Base Graph Construction}
\State Construct base tree $H_{\text{base}}$ with parameters $w_r$ and $w_c$.
\State Define first layer variable nodes as $v_{1,0}, v_{2,0}, \dots, v_{w_r,0}$.
\For{$i = 1$ to $w_r$}
    \State Denote second-layer variable nodes rooted at $v_{i,0}$ by:
    \[
        \{v_{i,1}, \dots, v_{i,(w_c - 1)(w_r - 1)}\}
    \]
\EndFor
\State \textbf{Level 2: Adding Extra Check Nodes}
\For{$j = 1$ to $(w_c - 1)(w_r - 1)$}
    \State Add check node $c_j$ connecting $\{v_{1,j}, \dots, v_{w_r,j}\}$
\EndFor

\If{$w_c \geq 3$}
    \State \textbf{Level 3: Recursive Expansion}
    \For{$s = 1$ to $w_c - 2$}
        \State Create $w_r$ copies of the current graph
        \State Label variable nodes in the $t$-th copy as $v^{(s,t)}_{i,j}$
        \For{$j = 1$ to $w_r^s(w_c-1)(w_r-1)$}
            \State Add check node $c^{(s)}_j$ connecting
            \[
                \{v^{(s,1)}_{i,j}, v^{(s,2)}_{i,j}, \dots, v^{(s,w_r)}_{i,j}\}
            \]
        \EndFor
    \EndFor
\EndIf

\end{algorithmic}
\end{algorithm}
}

\begin{proposition}
\label{reg-dim}
    For given $w_r \geq w_c \geq 2$, the corresponding $H_\text{reg}$ constructed by Algorithm 1, has dimensions
    \begin{equation}
        n = w_r^{(w_c-2)}(w_r + w_r(w_r-1)(w_c - 1)),
    \end{equation}
    \begin{equation}
        m = w_r^{(w_c-2)}(1 + w_r(w_c - 1) + (w_r - 1)(w_c - 1)^2).
    \end{equation}
\end{proposition}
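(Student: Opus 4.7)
The plan is to trace Algorithm~1 phase by phase, counting the variable and check nodes introduced at each step, and then solve the resulting two-term recursion that arises in Level~3.

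I will let $V_s$ and $C_s$ denote the numbers of variable and check nodes at the end of iteration $s$ of the Level~3 loop, taking $s=0$ to mean the state right after Level~2. From the three-layer base tree $H_\text{base}$ in Fig.~1, Level~1 contributes $1 + w_r(w_c-1)$ check nodes and $w_r + w_r(w_c-1)(w_r-1)$ variable nodes. Level~2 then appends exactly $(w_c-1)(w_r-1)$ new check nodes, one per shared third-layer index across the $w_r$ first-layer subtrees, giving
\begin{equation*}
V_0 = w_r\bigl(1+(w_c-1)(w_r-1)\bigr), \qquad C_0 = 1 + w_r(w_c-1) + (w_r-1)(w_c-1).
\end{equation*}
When $w_c = 2$ the Level~3 loop is empty, and these already agree with the claim with prefactor $w_r^{w_c-2}=1$; as a sanity check this recovers the FDPC base-matrix dimensions $n = w_r^2,\ m = 2w_r$ mentioned in Remark~1.

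For $w_c \geq 3$, each iteration $s$ of Level~3 (i) makes $w_r$ disjoint copies of the current graph, giving $V_s = w_r V_{s-1}$, and (ii) inserts one new check node per ``column'' of deficient (degree less than $w_c$) variable nodes across the $w_r$ copies. A short induction on $s$ shows that at the start of iteration $s$ the graph contains exactly $w_r^{s}(w_c-1)(w_r-1)$ deficient variable nodes, namely the third-layer descendants of the currently embedded copies of $H_\text{base}$, each of current degree $s+1$. Consequently iteration $s$ adds $w_r^{s}(w_c-1)(w_r-1)$ new check nodes, yielding
\begin{equation*}
C_s = w_r C_{s-1} + w_r^{s}(w_c-1)(w_r-1).
\end{equation*}
Dividing through by $w_r^{s}$ telescopes this into $C_s/w_r^{s} = C_0 + s(w_c-1)(w_r-1)$. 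Setting $s = w_c-2$ and simplifying $(w_r-1)(w_c-1) + (w_c-2)(w_r-1)(w_c-1) = (w_r-1)(w_c-1)^2$ inside the bracket produces the claimed formula for $m$, and $n = V_{w_c-2} = w_r^{w_c-2} V_0$ produces the claimed formula for $n$.

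The only delicate point is the inductive claim underlying~(ii): I must verify that after each Level~3 iteration the variable nodes of degree less than $w_c$ are precisely the third-layer descendants of the current copies of $H_\text{base}$, that every such node gains exactly one check edge per iteration, and that the copying step multiplies their count by $w_r$. Once this invariant is established, the algebraic manipulation above is routine and the proposition follows.
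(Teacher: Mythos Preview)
Your argument is correct and follows exactly the approach the paper indicates: the paper's entire proof is the single sentence ``The proof is derived by counting the total number of variable nodes and the required number of copies,'' and your recursion $V_s = w_r V_{s-1}$, $C_s = w_r C_{s-1} + w_r^s(w_c-1)(w_r-1)$ is precisely that count made explicit. Your telescoping and the final simplification $(w_r-1)(w_c-1)\bigl(1+(w_c-2)\bigr)=(w_r-1)(w_c-1)^2$ are both correct, and the degree invariant you flag at the end is routine from the algorithm's structure.
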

\begin{proof}
    The proof is derived by counting the total number of variable nodes and the required number of copies.
\end{proof}
\noindent{\text{Remark $3$:}}
By Proposition \ref{reg-dim}, for $w_c = 3$, we have $n = \theta(w_r^3)$ and $m = \theta(w_r^2)$. In \cite{furedi1995graphs}, Furedi et al. proposed constructions with arbitrarily large girths with notably small sizes. However, the size estimates for both of these constructions are of the order $\frac{\cG}{2}$, with the exponent being approximately $4$ for $\cG = 8$. In contrast, the exponent for $H_\text{reg}$ with respect to $w_r$ is smaller than $\frac{\cG}{2} = 4$, while still maintaining a construction with lower complexity.

There exists a unique array that characterizes each variable node based on the indices it reaches during the construction in Algorithm 1. For any variable node $v$, we define its characteristic array as
\begin{equation}
    a_v = (t_1, \dots, t_{w_c-2}, \ell, j)
\end{equation}
where the indices $0 \leq t_1, \dots, t_{w_c-2} \leq w_r$ correspond to the copies where $H_{\text{base}}$ of $v$ is placed, and the indices $\ell$ and $j$ represent the first two indices that the variable nodes in $H_{\text{base}}$ receive at \text{Level $1$}.

\begin{proposition}
\label{reg-prop-gdmin}
    The girth of $H_{\text{reg}}$ is $\cG = 8$, and the minimum distance is  $d_{\text{min}} = 2^{w_c}$.
\end{proposition}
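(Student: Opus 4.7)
I plan to prove both claims by induction on the Level~3 iteration index $s \in \{0, 1, \dots, w_c - 2\}$, letting $G_s$ denote the graph after the $s$-th iteration; in particular $G_0$ is the base tree $H_{\text{base}}$ together with the Level~2 check nodes $c_j$, and $G_{w_c-2} = H_{\text{reg}}$.

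For the girth, I would first argue that $G_0$ has girth exactly $8$: since $H_{\text{base}}$ is a tree, every cycle of $G_0$ must use at least one Level~2 check $c_j$, and a cycle through a single $c_j$ closes via the unique length-$6$ tree path between two of $c_j$'s second-layer neighbors (going up through the root and back down the other branch), yielding an $8$-cycle; cycles using several Level~2 checks are then shown to have length at least $8$ by bounding each tree arc from below. For the inductive step, $G_{s+1}$ is formed from $w_r$ vertex-disjoint copies of $G_s$ together with new level-$(s+1)$ check nodes, which are the only cross-copy edges. Any new cycle must therefore use at least two of these new check nodes (to leave and re-enter a copy), contributing $4$ edges, while the two within-copy segments contribute at least $2$ edges each, giving a total length of at least $8$. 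Eight-cycles survive from $G_0$ into $G_{s+1}$, so the girth is exactly $8$.

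For the minimum distance, I claim $d_s := d_{\min}(G_s) = 2^{s+2}$, so that $d_{w_c-2} = 2^{w_c}$. \emph{Base case.} For $w_c = 2$, $G_0$ is a disjoint union of $8$-cycles, so $d_0 = 4$ is immediate. For $w_c \geq 3$, I would exhibit the weight-$4$ codeword $\{v_{1,j_1}, v_{2,j_1}, v_{1,j_2}, v_{2,j_2}\}$, where $j_1 \neq j_2$ are chosen so that $v_{1,j_1}$ and $v_{1,j_2}$ lie under a common Layer~2 check (possible because $w_r \geq w_c$ forces $w_r - 1 \geq 2$), and then verify that every root, Layer~2, and Level~2 check has even support. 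The matching lower bound $d_0 \geq 4$ follows from a short case analysis: size-$1$ and size-$2$ supports fail because no two distinct variables share the same check-neighborhood in $G_0$, and size-$3$ supports fail by a parity argument on the root check and on the Level~2 checks $c_j$. \emph{Inductive step.} The upper bound $d_{s+1} \leq 2 d_s$ comes from placing a minimum-weight $G_s$-codeword identically into two distinct level-$(s+1)$ copies, so that every new level-$(s+1)$ check sees count~$2$. For the lower bound, the restriction of any $G_{s+1}$-codeword to each copy is a $G_s$-codeword (since the internal check nodes of each copy are inherited from $G_s$), and the new level-$(s+1)$ checks force every second-layer position $(i,j)$ with $j \geq 1$ to appear in an even number of copies. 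If only one copy had a non-zero restriction, it could contain no second-layer variable and would consist solely of first-layer variables---but such a set cannot be a $G_s$-codeword, since each participating first-layer variable's $w_c - 1$ Layer~2 checks would then see odd support. Hence at least two copy-restrictions are non-zero, each of weight at least $d_s$, giving total weight at least $2 d_s$.

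The main obstacle I anticipate is pinning down the base case $d_0 = 4$ for $w_c \geq 3$: while the weight-$4$ construction is short, the matching lower bound requires a careful check-by-check verification that the relatively large column weight $w_c$ of first-layer variables does not admit any smaller codeword. Once this base case is secured, the recursive doubling in the inductive step is routine and gives $d_{\min}(H_{\text{reg}}) = 2^{w_c}$.
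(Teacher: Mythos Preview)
Your proposal is essentially correct and takes a more structural route than the paper. For the girth, the paper assigns each variable its characteristic array $a_v=(t_1,\dots,t_{w_c-2},\ell,j)$ and records two observations: any check joins variables whose arrays differ in exactly one coordinate, and that coordinate determines the check uniquely; ruling out $4$- and $6$-cycles is then a short pigeonhole argument on coordinates. Your induction on $s$ instead exploits the copy structure directly (at least two cross-copy checks contributing four edges, plus within-copy arcs of length at least two because each second-layer variable meets exactly one new check), which is an equally valid and arguably more natural argument for a recursive construction. For the minimum distance the two approaches converge: the paper works out only $w_c=3$ in detail (at least four variables per involved $H_{\text{base}}$, at least two $H_{\text{base}}$'s) and defers the general case to an unstated induction, whereas your doubling $d_{s+1}=2d_s$ with base $d_0=4$ is precisely that induction made explicit, including the key step that a nonzero copy-restriction must contain a second-layer variable.

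One small correction: for $w_c=2$ the graph $G_0$ is \emph{not} a disjoint union of $8$-cycles once $w_r\ge 3$; the root has degree $w_r$ and $G_0$ is connected. This does not damage your argument, since the weight-$4$ codeword $\{v_{1,j_1},v_{2,j_1},v_{1,j_2},v_{2,j_2}\}$ and the parity lower bound you give for $w_c\ge 3$ apply verbatim when $w_c=2$ and $w_r\ge 3$, and the residual case $w_c=w_r=2$ really is a single $8$-cycle. Simply drop the separate ``disjoint union'' sentence and treat $w_c\ge 2$ uniformly.
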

Before proving Proposition \ref{reg-prop-gdmin}, we first state two observations about $H_\text{reg}$.

\noindent{\text{Observation $1$:}}\label{reg-obs1}
Consider two vertices $v_1$ and $v_2$ with characteristic arrays $a_1 = (t_{1,1},...,t_{1,w_c-2}, \ell_1, j_1)$ and $a_2 = (t_{2,1},...,t_{2,w_c-2}, \ell_2, j_2)$. If a check node connects $v_1$ and $v_2$, then $a_1$ and $a_2$ are identical at $w_c-2$ indices and differ in exactly one index.

\noindent{\text{Observation 2:}}\label{reg-obs2}
Assume that check node $c_{12}$ connects two vertices $v_1$ and $v_2$, and $c_{13}$ connects $v_1$ and $v_3$, with characteristic arrays $a_1, a_2$, and $a_3$ respectively. If $a_1$ and $a_2$ differ only at the $i$-th index, and $a_1$ and $a_3$ also differ only at the $i$-th index, then we must have $c_{12} = c_{13}$. 

\begin{proof}
    Let us denote each vertex $v$ by its characteristic array $a$. First, note that there is no pair of variable nodes $(v_1, v_2)$ with index arrays $a_1, a_2$ that two check nodes connecting them. Otherwise, by Observation $1$, since $v_1$ and $v_2$ are connected to each other by check nodes then, $a_1$ and $a_2$ are different in exactly one index. Thus, by Observation $2$, two check nodes are not distinct. This proves that $\cG \geq 6$.

    Now, suppose there exists a $6$-cycle consists of variable nodes $v_1, v_2$, and $v_3$ in $H_\text{reg}$, each pair connected to a distinct check node. Then, by Observation $1$, each pair of $a_1, a_2, a_3$ have exactly one different index with each other. However, by Observation $2$, these indices should be distinct which results in two of $a_i$ and $a_j$ be different in two different indices which is contradiction. Thus, $H_{\text{reg}}$ has no $6$-cycle.

    Now, we prove $d_{min} = 2^{w_c}$. For simplicity, first we prove the case $w_c = 3$, and the general case $w_c \geq 4$ proof is straightforward using induction. Suppose there are columns $\cV = \{v_1,..., v_r\}$ which have binary-sum equal to zero. Then, it can be seen that at least one of the $v_i$'s should be from the last layer variable nodes in $H_{\text{base}}$. Otherwise, all of the check nodes connecting $v_{i,0}$ have degree $1$ which do not sum to zero. Without loss of generality suppose it is the vertex $v^{(1,1)}_{1,1}$. All of the check nodes connected to $v^{(1,1)}_{1,1}$ should have at least two neighbors in the set $\cV$. Thus, there exist $v^{(1,1)}_{1,j}, v^{(1,1)}_{\ell,1}, v^{(1,t)}_{1,1} \in \cV$, for $\ell, t, j \neq 1$. Again, by similar logic since $v^{(1,1)}_{\ell,1} \in \cV$, there should also exist $v^{(1,1)}_{\ell,j_1} \in \cV$, for $j_1 \neq 1$. Thus, for base matrix $H_\text{base}^{(1,1)}$, we have at least $4$ variable nodes inside $\cV$. Moreover, since $v^{(1,t)}_{1,1} \in \cV$, by similar logic, $H_{\text{base}}^{(1,t)}$ has also at least four vertices in $\cV$. Therefore, $|\cV| \geq 8$. In fact the set of columns $\cV = \{v^{(1,1)}_{1,1}, v^{(1,1)}_{1,2}, v^{(1,1)}_{2,1}, v^{(1,1)}_{2,2},v^{(1,2)}_{1,1}, v^{(1,2)}_{1,2}, v^{(1,2)}_{2,1}, v^{(1,2)}_{2,2}\}$ has binary sum of zero. Thus, $d_{min} = 8$.
\end{proof}

\noindent{\text{Remark $4$:}}
As shown in \cite{hu2001progressive} and \cite{tanner1981recursive}, for LDPC codes with column weight $w_c$, the general minimum distance bound for $\cG = 8$ is $d_{min} \geq 2w_c$. However, for the regular LDPC code considered here, as derived in Proposition \ref{reg-prop-gdmin}, the minimum distance is $d_{min} = 2^{w_c}$, which grows exponentially with $w_c$, providing a significant improvement over the linear bound.

\section{Semi-Regular Construction}
\label{sec:semi}
In this section, we construct a $3$-semi-regular bipartite graph using free $3$-arithmetic progression sequences.
\begin{definition}[Free $3$-Arithmetic Progression ($3$-AP)]
A sequence of integers \( \{a_1, a_2, a_3, \dots \} \) is called a \textit{free $3$-arithmetic progression} ($3$-AP) if no three distinct elements \( a_i, a_j, a_k \) form an arithmetic progression, or equivalently, if
\(
a_i + a_j = 2a_k\) does not hold for any distinct \(i, j, k.
\)
\end{definition}
Although finding such sequences may seem challenging, several studies have shown that these sequences exhibit an approximately linear size. Furthermore, empirical results provide nearly optimal examples of such sequences up to some $N$, where $N$ represents the length of the sequence \cite{behrend1946sets,dybizbanski2012sequences,gasarch2008finding,salem1942sets}.
\begin{definition}
    Let $r(M)$ be the maximum size of a free $3$-AP sequence in $[M] \triangleq \{1,2,\dots,M\}$.
\end{definition}
\begin{theorem}[\cite{behrend1946sets}, \cite{gasarch2008finding}]
\label{lem-3aplin}
    There exists constants $c_1$ and $c_2$ such that 
    \begin{equation}
        c_1M^{1-c_2/\sqrt{\log M}} \leq r(M).
    \end{equation}
\end{theorem}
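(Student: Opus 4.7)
The plan is to follow Behrend's classical construction, which embeds a high-dimensional sphere of lattice points into the integers via a base-$B$ expansion in a way that preserves three-term arithmetic progressions. Fix two parameters $n, d \in \N$ to be optimized later, and consider the cube $[n]^d$. The key geometric observation is that for any radius $r$, the discrete sphere $S_r = \{\bx \in [n]^d : \|\bx\|_2^2 = r\}$ contains no three distinct collinear points, because a Euclidean sphere meets any affine line in at most two points. Equivalently, if $\bx + \bz = 2\by$ with $\bx, \by, \bz \in S_r$, then necessarily $\bx = \by = \bz$.

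First I would apply pigeonhole: as $\bx$ ranges over $[n]^d$, the squared norm $\|\bx\|_2^2$ takes integer values in $[d, dn^2]$, so some radius $r^\ast$ satisfies $|S_{r^\ast}| \geq n^d/(dn^2) = n^{d-2}/d$. Next, I would move the obstruction from $\Z^d$ to $\Z$ by choosing a base $B \geq 2n$ and defining $\phi : [n]^d \to \N$ by $\phi(\bx) = \sum_{i=1}^{d} x_i B^{i-1}$. Since every coordinate lies in $[1,n]$ with $2n \leq B$, the addition $\phi(\bx) + \phi(\bz)$ produces no carries, so $\phi$ is a Freiman-$2$ homomorphism: any solution of $\phi(\bx) + \phi(\bz) = 2\phi(\by)$ pulls back to $\bx + \bz = 2\by$ in $\Z^d$. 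Therefore $\phi(S_{r^\ast})$ is a free $3$-AP subset of $[M]$ for $M \leq B^d$, of cardinality at least $n^{d-2}/d$.

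Finally I would optimize $n$ and $d$ against the constraint $M \approx (2n)^d$. Taking logarithms gives $\log M \approx d\log(2n)$, and the size $n^{d-2}/d$ behaves like $M^{1-2/d}/(2^d \cdot d)$. Balancing the factors $M^{-2/d}$ and $2^{-d}$ suggests the choice $d \sim \sqrt{\log_2 M}$, for which $2^{-d}$ and $M^{-2/d}$ are both of order $M^{-\Theta(1/\sqrt{\log M})}$. Collecting terms yields
\begin{equation}
r(M) \geq |\phi(S_{r^\ast})| \geq c_1 M^{1 - c_2/\sqrt{\log M}}
\end{equation}
for suitable absolute constants $c_1, c_2 > 0$.

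The geometric step and the no-carry argument are clean, but the main obstacle is the parameter optimization: one must verify that the polynomial losses $1/d$ and exponential losses $2^{-d}$ arising from the base-$B$ encoding and the pigeonhole averaging can all be absorbed into a single exponent of the form $1 - c_2/\sqrt{\log M}$, and that the integer rounding of $n$ and $d$ does not degrade this exponent. Once this bookkeeping is handled, the claimed lower bound on $r(M)$ follows directly.
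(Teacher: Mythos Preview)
The paper does not give its own proof of this theorem; it is stated as a cited result from Behrend (1946) and the survey of Gasarch et al., and is used only as a black box to conclude that $b_t = \cO(t)$ up to sub-polynomial factors. Your proposal reproduces exactly the classical Behrend sphere construction that underlies the cited bound, so in that sense you are aligned with what the paper invokes. The outline is correct: pigeonhole on squared radii in $[n]^d$, the strict convexity of the Euclidean sphere to rule out collinear triples, a carry-free base-$B$ embedding into $[M]$, and the balancing $d \sim \sqrt{\log M}$.

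Two small technical points to clean up if you flesh this out. First, for the no-carry argument you need $x_i + z_i \leq B-1$, so the correct inequality is $B \geq 2n+1$ rather than $B \geq 2n$; with $B = 2n$ the extremal case $x_i = z_i = n$ produces a carry and breaks the Freiman homomorphism property. Second, the parameter bookkeeping you flag as the ``main obstacle'' is routine: given a target $M$, one sets $d = \lfloor \sqrt{\log_2 M} \rfloor$ and $n = \lfloor M^{1/d}/2 \rfloor$, and then the losses $1/d$, $2^{-d}$, and $M^{-2/d}$ are each bounded by $M^{-c/\sqrt{\log M}}$ for an absolute constant $c$, so they combine into the stated exponent without difficulty.
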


Consider an increasing free $3$-AP sequence $b_i$. Using this sequence, we construct the parity-check matrix $H_s$, where each column has a weight of $3$, and the matrix has a girth of $8$.

\begin{definition}
    Let $n = t^2$ for some integer $t \geq 0$. The matrix $H_s$ of dimension $m \times n$ has three distinct levels of rows:
\begin{itemize}
    \item \text{Level $1$}: Contains the first $1$s of all columns (rows $1$ to $t$).
    \item \text{Level $2$}: Contains the second $1$s of all columns (rows $t+1$ to $2t+b_t-1$).
    \item \text{Level $3$}: Contains the third $1$s of all columns (rows $2t+b_t$ to $4t + 2b_t -3$).
\end{itemize}
For each column $j$ (where $ j = 1, 2, \dots, n$), the row positions of the three $1$s are determined as follows:

\subsubsection*{First Level ($ r_{1j} $)}
The first $1$ in column $j$ is placed in row $a_j$, chosen arbitrarily from Level $1$, such that each row has an equal weight of $t$:
\begin{equation}
\label{eq-lvl1}
    r_{1j} = a_j \quad \text{where} \quad a_j \in \text{Level $1$}.
\end{equation} 
\subsubsection*{Second Level ($r_{2j}$)}
The second $1$ in column $j$ is placed at row $r_{2j}$, given by:
\begin{equation}
\label{eq-lvl2}
    r_{2j} = c_1 + b_{i_j} + r_{1j}
\end{equation}
where $c_1 = \sqrt{n} - 1$, and $i_j$ is chosen such that for any $\ell \neq k$ where $a_\ell = a_k$, we have $i_\ell \neq i_k$.

\subsubsection*{Third Level ($r_{3j}$)}
The third $1$ in column $j$ is placed at row $r_{3j}$, given by:
\begin{equation}
\label{eq-lvl3}
r_{3j} = c_2 + a_j + r_{2j}
\end{equation}
where $c_2 = \sqrt{n} - 1 + b_t$.
\end{definition}

Note that the constants in equations \eqref{eq-lvl2} and \eqref{eq-lvl3} are chosen to ensure that the $2$nd- and $3$rd-level $1$s are placed in distinct levels.

\begin{figure}[t]\vspace{-0.05in}
	\centering
	\includegraphics[scale=0.35]{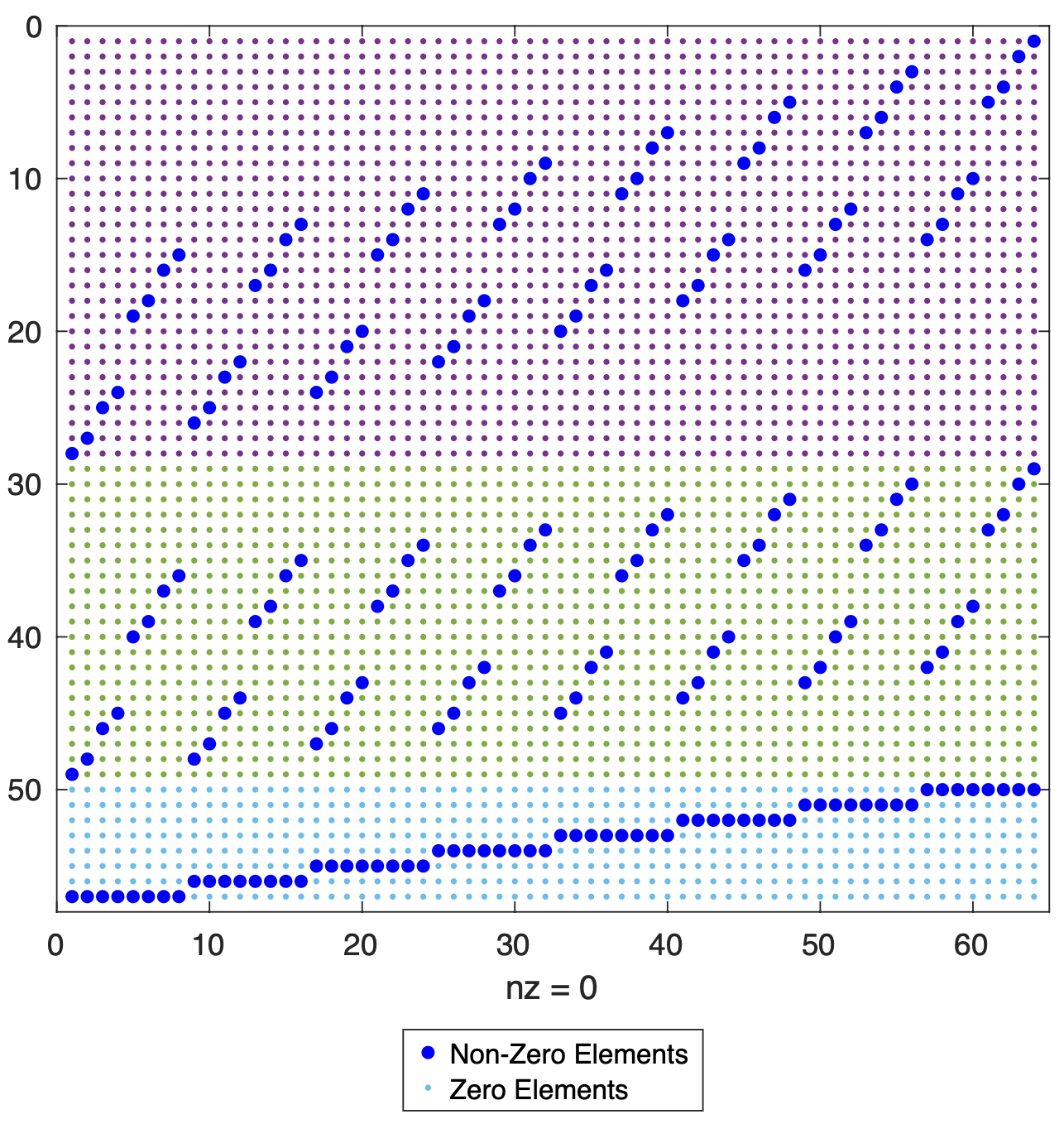}
	\caption{$n=64$, $\cG = 8$, $b = \{1, 2, 4, 5, 10, 11, 13, 14\}$.}
    \label{3semi-ex}
	\vspace{-0.2in}
\end{figure}
\noindent{\text{Example $2$:}} Fig. \ref{3semi-ex}. shows an example of a $3$-semi-regular matrix with dimensions $n = 64,\ m = 57$ derived by the sequence b from \cite{dybizbanski2012sequences}.
\begin{proposition}
    The tanner graph $\cG_s$ corresponding to the parity-check matrix $H_s$ has girth $8$, and the code $\cC_s$ with parity-check $H_s$ has minimum distance of at least $6$.
\end{proposition}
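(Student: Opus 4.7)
The plan is to treat the girth and minimum distance separately, deriving the latter from the former together with the column weight being $3$. The key structural observation for the girth is that each column of $H_s$ has exactly one $1$ in each of the three row ranges (Levels), so two distinct columns can share a row only within the same Level. Concretely, columns $j$ and $k$ share a row at Level $1$ iff $a_j = a_k$; at Level $2$ iff $a_j + b_{i_j} = a_k + b_{i_k}$; and at Level $3$ iff $2a_j + b_{i_j} = 2a_k + b_{i_k}$. To exclude $4$-cycles, suppose $j \neq k$ share rows in two Levels. Any pair of the three equalities above forces $a_j = a_k$ together with $b_{i_j} = b_{i_k}$; but the construction stipulates $i_\ell \neq i_k$ whenever $a_\ell = a_k$ with $\ell \neq k$, and $(b_i)$ is injective, giving a contradiction.

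For $6$-cycles, let variable nodes $v_{j_1}, v_{j_2}, v_{j_3}$ be pairwise linked by three distinct check nodes, and let $L_{12}, L_{23}, L_{31}$ be the corresponding Levels. Since each column has only one $1$ per Level, the two check nodes incident to each variable on the cycle must lie in different Levels, so $\{L_{12}, L_{23}, L_{31}\} = \{1,2,3\}$. I would consider the two essentially different cyclic orderings of Levels, substitute into the three defining equations above, and eliminate the $a_{j_\ell}$; in each case the system collapses (up to relabeling) to $b_{i_{j_1}} + b_{i_{j_3}} = 2\, b_{i_{j_2}}$, so the three $b_{i_{j_\ell}}$ form a $3$-AP. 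Using injectivity of $(b_i)$ and the distinctness rule on $i$'s, one shows that the three indices $i_{j_1}, i_{j_2}, i_{j_3}$ are pairwise distinct (the degenerate subcases $i_{j_\ell} = i_{j_{\ell'}}$ all force $a_{j_1}=a_{j_2}=a_{j_3}$, contradicting the very rule that kept the $i$'s apart). This produces three distinct elements of $b$ in $3$-AP, contradicting the $3$-AP-free property of $b$. Hence $\cG \geq 8$.

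For the minimum distance, a codeword of weight $w$ corresponds to $w$ columns whose $\Ftwo$-sum is zero, i.e., a subgraph of the Tanner graph in which every touched check node has even degree at least $2$. The total edge count is $3w$; for odd $w$ the sum of degrees is odd, ruling out $w \in \{1,3,5\}$. For $w = 2$, two columns sharing two rows would be a $4$-cycle, excluded by girth $\geq 8$. For $w = 4$, the only feasible check-side degree profile has six check nodes all of degree $2$; viewing each such check node as an edge on the four variable nodes realises $K_4$, and any $3$ of these variables form a $6$-cycle in the Tanner graph, again contradicting girth $8$. Therefore $d_{\min} \geq 6$. The step I expect to be the main obstacle is the $6$-cycle case: one has to carefully track the two cyclic Level orderings and close the degenerate branches where two of the $i_{j_\ell}$ happen to coincide, by backtracking to the construction rule that ties equal $a$-values to distinct $i$-indices.
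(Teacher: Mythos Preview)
Your argument is correct. For the girth claim you follow essentially the same line as the paper: use the level structure to see that two columns can meet in at most one level, rule out $4$-cycles, and then show that a $6$-cycle forces a $3$-term arithmetic progression in the sequence $b$. The paper simply picks one cyclic ordering ``without loss of generality'' and writes $b_{i_p}+b_{i_\ell}=2b_{i_q}$; you are more careful in treating both cyclic orderings and in closing the degenerate branch where the common difference is zero (which the paper glosses over). That extra care is warranted but does not change the strategy.

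Where you genuinely diverge is the minimum-distance part. The paper does not argue this directly: it invokes the Tanner-type bound $d_{\min}\ge 1+w_c(w_c-1)^{\lfloor (\cG-2)/4\rfloor -1}$ (equivalently, \cite[Theorem~2]{hu2001progressive}) with $w_c=3$ and $\cG=8$ to get $d_{\min}\ge 6$. Your parity argument for odd weights, together with the $K_4$ reduction for $w=4$, is a self-contained alternative that avoids any external citation. One small point worth making explicit in your write-up: the claim that ``the only feasible check-side degree profile has six check nodes all of degree $2$'' uses girth $\ge 6$, since a check node of degree $\ge 4$ together with any other check node of degree $\ge 2$ on the same four variables would already create a $4$-cycle. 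With that one sentence added, your $d_{\min}$ argument is complete and arguably cleaner than appealing to the general bound.
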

\begin{proof}
First note that due to the conditions \eqref{eq-lvl1}, \eqref{eq-lvl2}, \eqref{eq-lvl3} we can not have two columns with more than one intersection, and consequently, a $4$-cycle can not happen.
    Now, suppose there exist three columns such that each pair of them has exactly one distinct intersection. Note that due to the level based construction, we can not have two intersections in two rows at the same level. Without loss of generality suppose there are three columns $p, q, \ell$ such that $r_{1p} = r_{1q}$, $r_{2q} = r_{2\ell}$, and $r_{3p} = r_{3\ell}$. Thus, we have 
    \vspace{-0.5mm}
    \begin{equation*}
        r_{3p} - r_{1p} = r_{2q} - r_{1q} + r_{3\ell} - r_{2\ell},
    \end{equation*}
    or equivalently,
    \begin{equation}
    \label{eq-cond1}
        c_2 + c_1 + a_p + b_{i_p} = c_1 + b_{i_q} + c_2 + a_{\ell},
    \end{equation}
    \begin{equation}
        \label{eq-cond2}
        a_p = a_q,
    \end{equation} 
    \begin{equation}   
    \label{eq-cond3}
        a_q + c_1 + b_{i_q} = a_\ell + c_1 + b_{i_\ell}.
    \end{equation}
    Then, equations \eqref{eq-cond1}, \eqref{eq-cond2}, and \eqref{eq-cond3} together, concludes
    \begin{equation}
        b_p + b_\ell = 2b_q,
    \end{equation}
    which is contradiction.
    Since the girth of $\cG_s$ is $8$, by \cite[Theorem 2]{hu2001progressive}, we have 
    \vspace{-3mm}
    \begin{equation}
        d_{min} \geq 6.
    \end{equation}
\end{proof}
\vspace{-0.5mm}
\noindent{\text{Remark $5$:}}
As $n = t^2$ grows larger, again by Theorem \ref{lem-3aplin}, $r(M)$ approaches a nearly linear function of $M$, implying $M = \cO(t)$ as $n \to \infty$. Since $H_s$ has $m = 4t + 2b_t - 3$ rows, it follows that $m = \cO(t)$ as $n \to \infty$. Consequently, the rate of the code $\cC_s$ is 
\[
R = 1 - \frac{m}{n} \approx 1 - \frac{1}{t} \to 1 \quad \text{as } n \to \infty.
\]

\section{Conclusion and Future Directions}
\label{sec:conclusion}

In this paper, we studied the relations between the girth of a bipartite graph and the number of variable and check nodes. We derived bounds on the size of the vertices in regular bipartite graphs given constraints on the girth as well as on the column weight. Furthermore, we presented two constructions for the bipartite graph constructions with girth $\cG = 8$. 

There are several directions for future work. As we discussed in Remark $1$, the lower bound presented in Theorem \ref{thm-8bnd} is tight at least for $w_c=2$. However, our results leave the tightness of this bound for general $w_c$ as an open problem, which is an interesting direction for future research. In other words, either constructions exist to achieve this bound for any $w_c$, or the lower bound can be improved. Furthermore, when bipartite graphs are used as the Tanner graph of a code, the interaction between the minimum distance of the resulting code and the girth of the Tanner graph, and how the lower bound of the required redundancy $m$ could be revised to take both parameters into account is another direction for future research.

\bibliographystyle{IEEEtran}
{\footnotesize \bibliography{reffff}}

\begin{thebibliography}{10}
\providecommand{\url}[1]{#1}
\csname url@samestyle\endcsname
\providecommand{\newblock}{\relax}
\providecommand{\bibinfo}[2]{#2}
\providecommand{\BIBentrySTDinterwordspacing}{\spaceskip=0pt\relax}
\providecommand{\BIBentryALTinterwordstretchfactor}{4}
\providecommand{\BIBentryALTinterwordspacing}{\spaceskip=\fontdimen2\font plus
\BIBentryALTinterwordstretchfactor\fontdimen3\font minus \fontdimen4\font\relax}
\providecommand{\BIBforeignlanguage}[2]{{%
\expandafter\ifx\csname l@#1\endcsname\relax
\typeout{** WARNING: IEEEtran.bst: No hyphenation pattern has been}%
\typeout{** loaded for the language `#1'. Using the pattern for}%
\typeout{** the default language instead.}%
\else
\language=\csname l@#1\endcsname
\fi
#2}}
\providecommand{\BIBdecl}{\relax}
\BIBdecl

\bibitem{mackay1999good}
D.~J. MacKay, ``Good error-correcting codes based on very sparse matrices,'' \emph{IEEE transactions on Information Theory}, vol.~45, no.~2, pp. 399--431, 1999.

\bibitem{richardson2001capacity}
T.~J. Richardson and R.~L. Urbanke, ``The capacity of low-density parity-check codes under message-passing decoding,'' \emph{IEEE Transactions on information theory}, vol.~47, no.~2, pp. 599--618, 2001.

\bibitem{gallager1962low}
R.~Gallager, ``Low-density parity-check codes,'' \emph{IRE Transactions on information theory}, vol.~8, no.~1, pp. 21--28, 1962.

\bibitem{price2017survey}
A.~Price and J.~Hall, ``A survey on trapping sets and stopping sets,'' \emph{arXiv preprint arXiv:1705.05996}, 2017.

\bibitem{smarandache2022unifying}
R.~Smarandache and D.~G. Mitchell, ``A unifying framework to construct {QC-LDPC} tanner graphs of desired girth,'' \emph{IEEE Transactions on Information Theory}, vol.~68, no.~9, pp. 5802--5822, 2022.

\bibitem{smarandache2012quasi}
R.~Smarandache and P.~O. Vontobel, ``Quasi-cyclic {LDPC} codes: Influence of proto-and tanner-graph structure on minimum {H}amming distance upper bounds,'' \emph{IEEE Transactions on Information Theory}, vol.~58, no.~2, pp. 585--607, 2012.

\bibitem{kim2007quasi}
S.~Kim, J.-S. No, H.~Chung, and D.-J. Shin, ``{Quasi-Cyclic} low-density parity-check codes with girth larger than $12$,'' \emph{IEEE Transactions on Information Theory}, vol.~53, no.~8, pp. 2885--2891, 2007.

\bibitem{tasdighi2016efficient}
A.~Tasdighi, A.~H. Banihashemi, and M.-R. Sadeghi, ``Efficient search of girth-optimal {QC-LDPC} codes,'' \emph{IEEE Transactions on Information Theory}, vol.~62, no.~4, pp. 1552--1564, 2016.

\bibitem{tutte1947family}
W.~T. Tutte, ``A family of cubical graphs,'' in \emph{Mathematical Proceedings of the Cambridge Philosophical Society}, vol.~43, no.~4.\hskip 1em plus 0.5em minus 0.4em\relax Cambridge University Press, 1947, pp. 459--474.

\bibitem{alon2002moore}
N.~Alon, S.~Hoory, and N.~Linial, ``The {Moore} bound for irregular graphs,'' \emph{Graphs and Combinatorics}, vol.~18, pp. 53--57, 2002.

\bibitem{erdos1964extremal}
P.~Erd{\"o}s, ``Extremal problems in graph theory,'' \emph{Publ. House Cszechoslovak Acad. Sci., Prague}, pp. 29--36, 1964.

\bibitem{de1991maximum}
D.~De~Caen and L.~A. Sz{\'e}kely, \emph{The maximum size of 4-and 6-cycle free bipartite graphs on m, n vertices}.\hskip 1em plus 0.5em minus 0.4em\relax Rheinische Friedrich-Wilhelms-Universit{\"a}t, 1991.

\bibitem{lam2001graphs}
T.~Lam, ``Graphs without cycles of even length,'' \emph{Bulletin of the Australian Mathematical Society}, vol.~63, no.~3, pp. 435--440, 2001.

\bibitem{mahdavifar2024high}
H.~Mahdavifar, ``High-rate fair-density parity-check codes,'' \emph{arXiv preprint arXiv:2402.06814}, 2024.

\bibitem{Biggs1998ConstructionsFC}
\BIBentryALTinterwordspacing
N.~L. Biggs, ``Constructions for cubic graphs with large girth,'' \emph{Electron. J. Comb.}, vol.~5, 1998. [Online]. Available: \url{https://api.semanticscholar.org/CorpusID:7695157}
\BIBentrySTDinterwordspacing

\bibitem{furedi1995graphs}
Z.~Furedi, F.~Lazebnik, A.~Seress, V.~A. Ustimenko, and A.~J. Woldar, ``Graphs of prescribed girth and bi-degree,'' \emph{Journal of Combinatorial Theory, Series B}, vol.~64, no.~2, pp. 228--239, 1995.

\bibitem{hu2001progressive}
X.-Y. Hu, E.~Eleftheriou, and D.-M. Arnold, ``Progressive edge-growth tanner graphs,'' in \emph{GLOBECOM'01. IEEE Global Telecommunications Conference (Cat. No. 01CH37270)}, vol.~2.\hskip 1em plus 0.5em minus 0.4em\relax IEEE, 2001, pp. 995--1001.

\bibitem{tanner1981recursive}
R.~Tanner, ``A recursive approach to low complexity codes,'' \emph{IEEE Transactions on information theory}, vol.~27, no.~5, pp. 533--547, 1981.

\bibitem{behrend1946sets}
F.~A. Behrend, ``On sets of integers which contain no three terms in arithmetical progression,'' \emph{Proceedings of the National Academy of Sciences}, vol.~32, no.~12, pp. 331--332, 1946.

\bibitem{dybizbanski2012sequences}
J.~Dybizba{\'n}ski, ``Sequences containing no 3-term arithmetic progressions,'' \emph{the electronic journal of combinatorics}, pp. P15--P15, 2012.

\bibitem{gasarch2008finding}
W.~Gasarch, J.~Glenn, and C.~P. Kruskal, ``Finding large 3-free sets i: The small n case,'' \emph{Journal of Computer and System Sciences}, vol.~74, no.~4, pp. 628--655, 2008.

\bibitem{salem1942sets}
R.~Salem and D.~C. Spencer, ``On sets of integers which contain no three terms in arithmetical progression,'' \emph{Proceedings of the National Academy of Sciences}, vol.~28, no.~12, pp. 561--563, 1942.

\end{thebibliography}

\end{document}